\documentclass[12pt]{article}
\usepackage{myextension}
\usepackage{tocloft}
\title{A new proof of Delahan’s induced‑universality result}
\author{Jonathan CHAPPELON\footnote{E-mail address: \href{mailto:jonathan.chappelon@umontpellier.fr}{jonathan.chappelon@umontpellier.fr}}}
\affil{IMAG, Université de Montpellier, CNRS, Montpellier, France}
\date{\today}

\begin{document}
\maketitle
\begin{abstract}
We give a short and self-contained proof of Delahan’s theorem stating that every simple graph on $n$ vertices occurs as an induced subgraph of a Steinhaus graph on $\tfrac{n(n-1)}{2}+1$ vertices. This new proof is obtained by considering the notion of generating index sets for Steinhaus triangles.
\end{abstract}
\msc{05A10, 05C50, 05C75, 05B30}
\keywords{Steinhaus triangle, Steinhaus graph, induced‑universal, triangular numbers, binomial matrix, Vandermonde determinant}

\section{Introduction}

All along this paper, $n$ is a non-negative integer. The set of non-negative integers is denoted by $\N$. Let $t_n$ denote the $n$\textsuperscript{th} triangular number
$$
t_{n}=\sum_{i=0}^{n}i=\frac{n(n+1)}{2}
$$
and let $\Tn{n}$ be the triangle of integers
$$
\Tn{n} = \left\{ (i,j)\in\N^2\ \middle|\ i+j<n\right\}.
$$

A {\em binary Steinhaus triangle} (or {\em Steinhaus triangle} for short) of {\em size} $n$ is a down-pointing triangle $\left(a_{i,j}\right)_{(i,j)\in\Tn{n}}$ of $t_{n}$ zeroes and ones satisfying the same local rule as the standard Pascal triangle modulo $2$, that is,
\begin{equation}\label{eq01}
a_{i,j} \equiv a_{i-1,j} + a_{i-1,j+1} \pmod{2},
\end{equation}
for all $(i,j)\in\Tn{n}$ such that $i\ge1$. Note that $(0)$ and $(1)$ are the Steinhaus triangles of size $1$ and $\emptyset$ is the Steinhaus triangle of size $0$. An example of a Steinhaus triangle of size $7$ is depicted in Figure~\ref{fig01}

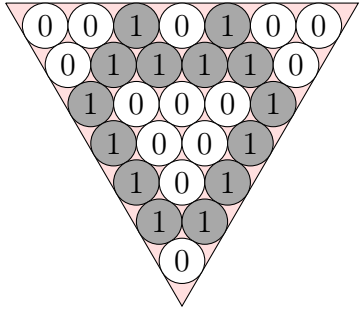
\begin{figure}[htbp]
\begin{center}
\begin{tikzpicture}[scale=0.3]
\figST{{0,0,1,0,1,0,0}}{black}{col0}{col1}{pink!50}
\end{tikzpicture}
\end{center}
\caption{The Steinhaus triangle $\ST{(0010100)}$ of size $7$}\label{fig01}
\end{figure}

It is clear that a Steinhaus triangle $\left(a_{i,j}\right)_{(i,j)\in\Tn{n}}$ is completely determined by its first row $\left(a_{0,j}\right)_{j=0}^{n-1}$. Indeed, by induction on $i$ and using \eqref{eq01}, we obtain
\begin{equation}\label{eq02}
a_{i,j} \equiv \sum_{k=0}^{i}\binom{i}{k}a_{0,j+k}\pmod{2},
\end{equation}
for all $(i,j)\in\Tn{n}$, where the binomial coefficient $\binom{a}{b}$ is the coefficient of the monomial $X^b$ in the expansion of $(1+X)^a$, for all non-negative integers $a$ and $b$ such that $b\le a$. In the sequel, the Steinhaus triangle whose first row is the sequence $S$ is denoted by $\ST{S}$. The Steinhaus triangle in Figure~\ref{fig01} is then $\ST{(0010100)}$.

Since the set $\setST{n}$ of binary Steinhaus triangles of size $n$ is closed under addition modulo $2$, it follows that $\setST{n}$ is a vector space over $\Zn{2}$. Moreover, since a Steinhaus triangle is uniquely determined by its first row, the dimension of $\setST{n}$ is $n$.

Beyond this elementary definition, Steinhaus triangles encode a surprisingly rich combinatorial structure. In particular, a classical line of research, originating from work of Steinhaus \cite{Steinhaus:1964aa} and Molluzzo \cite{Molluzzo:1978aa}, asks for conditions under which a Steinhaus triangle is {\em balanced}, that is, each symbol (or residue class, in a modular extension) occurs with the same multiplicity. In its binary form, this leads to questions about the distribution of zeroes and ones across the triangle \cite{Harborth:1972aa,Eliahou:2004aa,Chappelon2017a}. In a more general form, one studies Steinhaus triangles over $\Zn{m}$ and asks for the existence (or explicit construction) of balanced configurations for admissible sizes \cite{Chappelon2008,Chappelon2011,Chappelon2025}. This has fueled a sustained interest in structural and algebraic aspects of Steinhaus triangles (symmetries, linear subspaces, modular lifts, ...) as in \cite{Chappelon2019}, providing a broad context for the present work.

A {\em simple graph} is a pair $G=(V,E)$ where $V$ is a finite vertex set and $E\subset\binom{V}{2}$ is a set of unordered pairs of distinct vertices (no loops, no multiple edges). For $W\subset V$, the {\em induced subgraph} $G[W]$ is the graph with vertex set $W$ and
$$
E(G[W]) = \left\{ \{u,v\}\in E\ \middle|\ u,v\in W\right\}.
$$
In other words, it contains all edges of $G$ with both endpoints in $W$, and no others.

A {\em Steinhaus graph} of {\em order} $n\ge 1$ is a simple graph whose adjacency matrix has an upper-triangular part which is a Steinhaus triangle of size $n-1$. For any sequence $S=\left(a_1,a_2,\ldots,a_{n-1}\right)$ of zeroes and ones of length $n-1$, its associated Steinhaus graph $\Sgraph{S}$ is the simple graph of order $n$ whose adjacency matrix $\Smat{S}=\left(a_{i,j}\right)_{1\le i,j\le n}$ verifies
\begin{enumerate}[i)]
\item
$a_{i,j}=a_{j,i}$, for all $i,j\in\{1,2,\ldots,n\}$, (symmetry)
\item
$a_{i,i}=0$, for all $i\in\{1,2,\ldots,n\}$, (diagonal of zeroes)
\item
$a_{1,j}=a_{j-1}$, for all $j\in\{2,3,\ldots,n\}$, (sequence $S$)
\item
$a_{i,j}\equiv a_{i-1,j-1}+a_{i-1,j}\pmod{2}$, for all integers $i,j$ such that $2\le i<j\le n$, (local rule of $\ST{S}$).
\end{enumerate}
For example, for $S=(0010100)$, the Steinhaus graph $\Sgraph{S}$ and its adjacency matrix $\Smat{S}$ are depicted in Figure~\ref{fig02}.

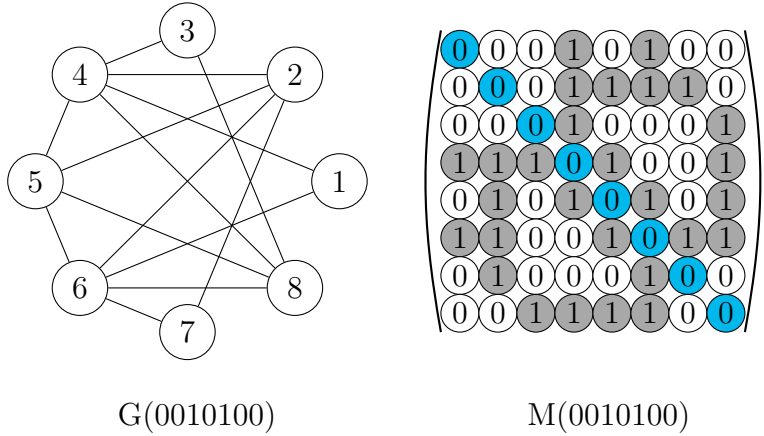
\begin{figure}[htbp]
\centerline{
\begin{tabular}{cc}
\begin{minipage}{5cm}
\begin{tikzpicture}[scale=0.5]
\figSG{{0,0,1,0,1,0,0}}{black}{white}
\end{tikzpicture}
\end{minipage}
&
\begin{minipage}{5cm}
\begin{tikzpicture}[scale=0.25]
\figSM{{0,0,1,0,1,0,0}}{black}{col0}{col1}{col3}
\end{tikzpicture}
\end{minipage}
\\ \ \\
$\Sgraph{0010100}$ & $\Smat{0010100}$ \\
\end{tabular}}
\caption{The Steinhaus graph $\Sgraph{0010100}$ and its adjacency matrix $\Smat{0010100}$}\label{fig02}
\end{figure}

The set of Steinhaus graphs of order $n$ is denoted by $\setSG{n}$. It is clear that there is a natural correspondence between $\setSG{n}$ and $\setST{n-1}$. Therefore, the set $\setSG{n}$ is a vector space over $\Zn{2}$ of dimension $n-1$.

A classical problem on Steinhaus graphs is to study those having certain graphical
properties such as bipartition \cite{Dymacek:1986aa,Dymacek:1995aa,Chang:1999aa}, planarity \cite{Dymacek:2000aa} or regularity \cite{Dymacek:1979aa,Bailey:1988aa,Augier:2008aa,Chappelon2009}. A survey on Steinhaus graphs can be found in \cite{Dymacek:1999aa}.

While Molluzzo’s problem concerns distributional properties inside Steinhaus triangles, the graph-theoretic perspective opens up a different class of questions about \emph{embeddings} and \emph{universality} within the family of Steinhaus graphs. A landmark theorem in this direction is due to Delahan: for each $n\ge 1$, every simple graph on $n$ vertices appears as an induced subgraph of a Steinhaus graph on $t_{n-1}+1$ vertices. Equivalently, the family of Steinhaus graphs of order $t_{n-1}+1$ is $n$-induced-universal.

For any positive integer $n$, the set of all simple graphs on the vertex set $\{1,2,\ldots,n\}$ is denoted by $\mathcal{G}_n$. In \cite{Delahan:1998aa}, it was proved that 
there exists an isomorphism between $\mathcal{G}_n$ and $\setSG{t_{n-1}+1}$.

\begin{thm}[Delahan \cite{Delahan:1998aa}]\label{thm1}
Let $n$ be a positive integer. We consider the $n$-subset
$$
W_{n} = \left\{ t_i+1\ \middle|\ i\in\{0,1,\ldots,n-1\} \right\}
$$
of $\left\{1,2,\ldots,t_{n-1}+1\right\}$. Then, the linear map
$$
\begin{array}{ccc}
\setSG{t_{n-1}+1} & \longrightarrow & \mathcal{G}_n \\
G & \longmapsto & G[W_n] \\
\end{array}
$$
is an isomorphism. Therefore, any simple graph of order $n$ is isomorphic to an induced subgraph of a Steinhaus graph of order $t_{n-1}+1$.
\end{thm}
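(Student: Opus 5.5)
The map $G\mapsto G[W_n]$ is linear over $\Zn{2}$, since the adjacency entries of $G[W_n]$ are coordinates of the adjacency matrix of $G$. Both spaces have dimension $t_{n-1}=\binom{n}{2}$. So I will only prove injectivity, i.e.\ that the square matrix expressing the $\binom n2$ entries of $G[W_n]$ in terms of the first row of $\ST{S}$ is invertible over $\Zn{2}$. The final sentence of the theorem then follows from surjectivity after relabelling the vertices of a given graph by $W_n$.

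First I make the entries explicit. Write $S=(x_1,\dots,x_{N})$ with $N=t_{n-1}$. For $1\le p<q$, the entry $a_{p,q}$ of $\Smat{S}$ is the entry in row $p-1$ of $\ST{S}$ whose first-row index starts at $q-p$. By \eqref{eq02},
$$a_{p,q}\equiv\sum_{k=0}^{p-1}\binom{p-1}{k}x_{q-p+k}\pmod 2 .$$
For $p=t_i+1<q=t_j+1$ with $0\le i<j\le n-1$ this becomes $\sum_{k=0}^{t_i}\binom{t_i}{k}x_{t_j-t_i+k}$. Substituting $m=t_j-t_i+k$ and $r=t_j-m$ gives
$$a_{t_i+1,t_j+1}\equiv\sum_{r=0}^{t_i}\binom{t_i}{r}x_{t_j-r}.$$
This only involves variables $x_m$ with $m\le t_j$.

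Next I organise the linear system into blocks. Group the $\binom n2$ equations by $j\in\{1,\dots,n-1\}$; block $j$ has $j$ equations, indexed by $i=0,\dots,j-1$. Group the variables into the blocks $\{x_m : t_{j-1}<m\le t_j\}$, which also have size $j$ and correspond to $r=0,\dots,j-1$. Since block $j$ of equations only involves variables with $m\le t_j$, the matrix is block lower triangular. It is therefore invertible over $\Zn{2}$ as soon as each diagonal block
$$M_j=\left(\binom{t_i}{r}\right)_{0\le i,r\le j-1}$$
has odd determinant.

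The main step is the parity of $\det M_j$. Since $\binom{x}{r}$ is a polynomial in $x$ of degree $r$ with leading coefficient $1/r!$, column operations give
$$\det M_j=\frac{\prod_{0\le a<b\le j-1}(t_b-t_a)}{\prod_{r=0}^{j-1}r!},$$
a Vandermonde determinant divided by the superfactorial. Using $t_b-t_a=\frac{(b-a)(a+b+1)}{2}$, I must show that the $2$-adic valuations of numerator and denominator agree, i.e.
$$\sum_{a<b<j}v_2\!\left(\tfrac{(b-a)(a+b+1)}{2}\right)=\sum_{r<j}v_2(r!).$$
I will prove this by induction on $j$: passing from $j$ to $j+1$ adds $\sum_{a<j}v_2\bigl((j-a)(j+a+1)/2\bigr)$ on the left and $v_2(j!)$ on the right. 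Exactly one of $j-a$ and $j+a+1$ is even, since their sum $2j+1$ is odd. So the added left-hand term equals $v_2\bigl(\prod_{a<j}(j-a)(j+a+1)\bigr)-j=v_2\bigl((2j)!/j!\bigr)-j$. By Legendre, $v_2((2j)!)=j+v_2(j!)$, so this is $v_2(j!)$, as required.

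I expect this valuation identity to be the main obstacle; the rest is bookkeeping of indices. The only other delicate point is checking the off-by-one conventions relating $\Smat{S}$ to $\ST{S}$ in the formula for $a_{p,q}$.
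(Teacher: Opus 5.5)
Your proposal is correct and is essentially the paper's proof, written directly on the linear system instead of through generating index sets. Your blocks $M_j=\bigl(\binom{t_i}{r}\bigr)_{0\le i,r\le j-1}$ are, up to reversing the columns, the diagonal blocks of Lemma~\ref{lem2}; you evaluate them with the same Vandermonde formula (Proposition~\ref{prop1}), and your $2$-adic count is just the parity content of the closed form $\prod_{i=1}^{j-1}(2i-1)^{j-i}$ from Proposition~\ref{prop2}.

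There is one slip to fix in the valuation step: $\prod_{a<j}(j-a)(j+a+1)=j!\cdot\frac{(2j)!}{j!}=(2j)!$, not $(2j)!/j!$. With your expression the last line would give $0$ instead of $v_2(j!)$. With $(2j)!$, Legendre gives $v_2((2j)!)-j=v_2(j!)$, which is what you need.
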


Delahan’s theorem shows that Steinhaus graphs are, in a precise induced sense, "large enough" to contain all $n$-vertex graphs. The goal of this paper is to give a new, short, and self-contained proof of Delahan’s induced-universality theorem by considering the notion of generating index sets of Steinhaus triangles.

This paper is organized as follows. Section~2 introduces generating index sets for Steinhaus triangles and characterizes them by invertibility modulo $2$. In Section~3, we study certain binomial minors in particular and prove that the relevant determinants are odd. Finally, in Section~4, by considering the results of Sections~2 and 3, Delahan’s theorem is proved.

\section{Generating index sets}

A subset $A$ of $\Tn{n}$ is said to be a {\em generating index set} of $\setST{n}$ if the knowledge of the values $a_{i,j}$, for all $(i,j)\in A$, uniquely determines the whole Steinhaus triangle $\left(a_{i,j}\right)_{(i,j)\in\Tn{n}}$ where the size of $A$ is minimal; i.e., a subset $A$ that makes the following linear map an isomorphism: 
$$
\pi_A :
\begin{array}[t]{ccc}
\setST{n} & \longrightarrow & \Z_2^{|A|} \\
\left(a_{i,j}\right)_{(i,j)\in\Tn{n}} & \longmapsto & \left(a_{i,j}\right)_{(i,j)\in A} \\
\end{array}
$$
where $\Z_2 = \left\{0,1\right\}$. Since $\dim\setST{n}=n$, we deduce that the cardinality of a generating index set of $\setST{n}$ is always $n$. From \eqref{eq02}, it is clear that the set of top row indices of a Steinhaus triangle of size $n$, that is,
$$
A_T = \left\{(0,0),(0,1),\ldots,(0,n-1)\right\},
$$
is a generating index set of $\setST{n}$. Note that $\pi_{A_T}^{-1}\!\left(S\right)=\ST{S}$, for all $S\in\Z_2^{n}$. It follows that the set $A$ is a generating index set of $\setST{n}$ if and only if the linear map
$$
\pi_A\circ\pi_{A_T}^{-1} : \Z_2^{n} \longrightarrow \Z_2^{n}
$$
is an automorphism. For instance, the $16$ generating index sets of $\setST{3}$ (4 up to the action of the dihedral group $D_3$) are depicted in Figure~\ref{fig03}, where a disk is either black if its position is in the generating index set or white otherwise. Moreover, the four $3$-subsets of $\Tn{3}$ that are not generating index sets of $\setST{3}$ are given in Figure~\ref{fig04}. 

\begin{figure}[htbp]
\centerline{\includegraphics[width=\textwidth]{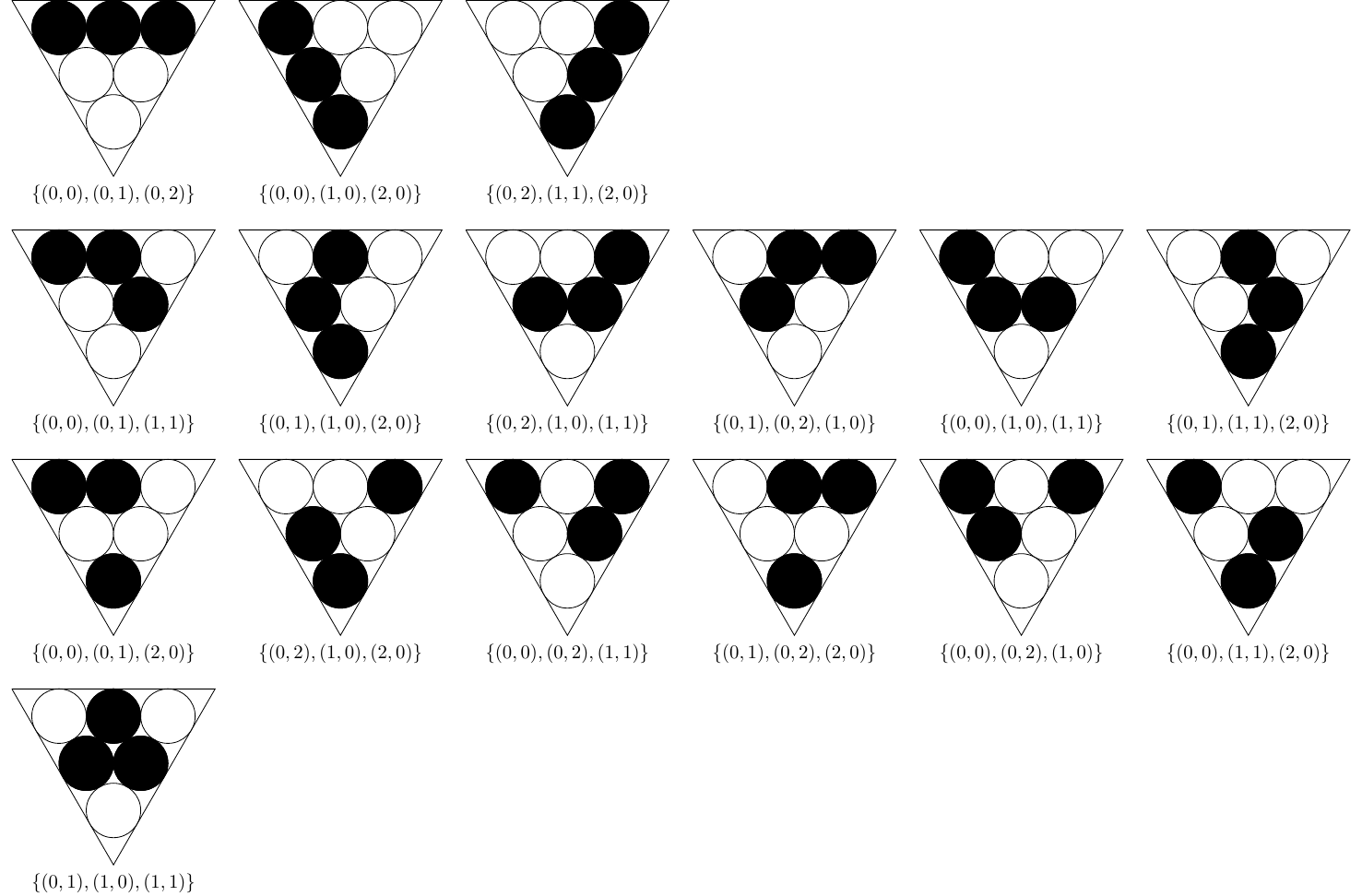}}
\caption{Generating index sets of $\setST{3}$}\label{fig03}
\end{figure}

\begin{figure}[htbp]
\centerline{\includegraphics[width=0.75\textwidth]{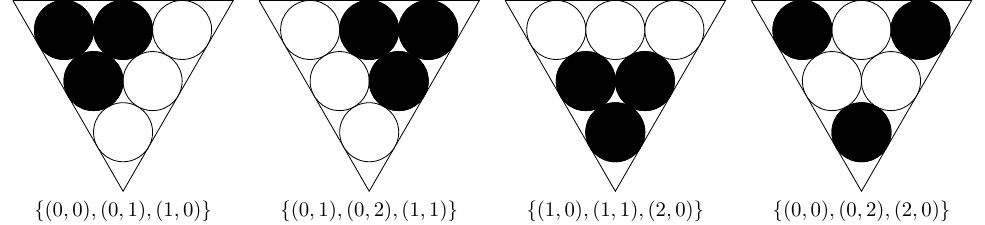}}
\caption{$3$-subsets of $\Tn{3}$ that are not generating index sets of $\setST{3}$}\label{fig04}
\end{figure}

Since the sets of right side indices
$$
A_R = \left\{ (0,n-1),(1,n-2),\ldots,(n-1,0) \right\}
$$
and left side indices
$$
A_L = \left\{ (0,0),(1,0),\ldots,(n-1,0) \right\}
$$
of a Steinhaus triangle $\ST{}$ of size $n$ can be seen as the top row indices of the $120^\circ$ and $240^\circ$ rotations of $\ST{}$, respectively, it follows that $A_R$ and $A_L$ are generating index sets of $\setST{n}$ too. Therefore, each element of a Steinhaus triangle can be expressed as a function of the terms of its first row, of its right side or of its left side.

For any non-negative integers $a$ and $b$ such that $b\le a$, the binomial coefficient $\binom{a}{b}$ corresponds to the number of ways to choose $b$ elements in a set of $a$ elements. Here, we extend this notation by supposing that $\binom{a}{b}=0$, for all integers $b$ such that $b<0$ or $b>a$. For this generalization, the Pascal identity
$$
\binom{a}{b} = \binom{a-1}{b-1}+\binom{a-1}{b}
$$
holds, for all positive integers $a$ and all integers $b$.

\begin{lem}\label{lem1}
Let $\left(a_{i,j}\right)_{(i,j)\in\Tn{n}}$ be a binary Steinhaus triangle of size $n$. Then, we have
$$
a_{i,j} \equiv \sum_{k=0}^{n-1}\binom{i}{k-j}a_{0,k} \equiv \sum_{k=0}^{n-1}\binom{n-1-(i+j)}{k-i}a_{k,n-1-k} \equiv \sum_{k=0}^{n-1}\binom{j}{k-i}a_{k,0} \pmod{2}
$$
for all $(i,j)\in\Tn{n}$.
\end{lem}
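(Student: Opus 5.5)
The first congruence is a reindexing of \eqref{eq02}. In $a_{i,j}\equiv\sum_{k=0}^{i}\binom{i}{k}a_{0,j+k}$ we substitute $k\mapsto k-j$. This gives $\sum_{k=j}^{j+i}\binom{i}{k-j}a_{0,k}$. Since $\binom{i}{k-j}=0$ for $k<j$ and for $k>i+j$ (by the extended convention), and since $i+j\le n-1$, we may extend the range of summation to $k\in\{0,\ldots,n-1\}$ without changing the sum.

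For the other two congruences, the plan is to exploit the rotational symmetry mentioned before the lemma, rather than inverting binomial matrices.

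\textbf{Left side.} Fix $(i,j)$. The entries $a_{i',j'}$ with $j'\ge j$ and $i'+j'\le i+j$ form a sub-triangle with apex $(i+j,0)$... but that is not the right picture, since entry $(i,j)$ depends on entries above it. The cleaner route is a direct induction on $j$ using the local rule rewritten as
$$
a_{i-1,j+1}\equiv a_{i,j}+a_{i-1,j}\pmod 2,
$$
that is, $a_{i,j+1}\equiv a_{i+1,j}+a_{i,j}$. This expresses column $j+1$ in terms of column $j$.
\begin{itemize}
\item[] For $j=0$, the formula $\sum_k\binom{0}{k-i}a_{k,0}=a_{i,0}$ holds trivially.
\item[] For the inductive step, assume the formula for column $j$. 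Then
$$
a_{i,j+1}\equiv\sum_k\left(\binom{j}{k-i-1}+\binom{j}{k-i}\right)a_{k,0}=\sum_k\binom{j+1}{k-i}a_{k,0}
$$
by Pascal's identity.
\end{itemize}
The range is fine: all $k$ with a nonzero coefficient satisfy $k\le i+j+1\le n-1$.

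\textbf{Right side.} Use the same induction along the anti-diagonals $d=n-1-(i+j)$, starting from $d=0$ and using the local rule in the form
$$
a_{i,j}\equiv a_{i+1,j}+a_{i+1,j-1}\pmod 2,
$$
i.e.\ $a_{i-1,j}\equiv a_{i,j}+a_{i,j-1}$, applied appropriately. It is cleanest to write the rule as $a_{i,j}\equiv a_{i+1,j-1}+a_{i,j-1}$... Concretely, I would take $a_{i,j}\equiv a_{i+1,j-1}+a_{i,j-1}$ wait, that is not valid. The valid form is $a_{i+1,j}\equiv a_{i,j}+a_{i,j+1}$, hence $a_{i,j}\equiv a_{i+1,j}+a_{i,j+1}$.

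So on the diagonal $d$, the entry $a_{i,j}$ is the sum of $a_{i+1,j}$ and $a_{i,j+1}$, both of which lie on diagonal $d-1$.
\begin{itemize}
\item[] For $d=0$, we have $j=n-1-i$, and the formula $\sum_k\binom{0}{k-i}a_{k,n-1-k}=a_{i,n-1-i}$ holds.
\item[] For the inductive step, the coefficients combine as $\binom{d-1}{k-i-1}+\binom{d-1}{k-i}=\binom{d}{k-i}$ by Pascal's identity.
\end{itemize}

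\textbf{Main obstacle.} There is no real difficulty. The only care needed is in picking the correct rewriting of \eqref{eq01} for each direction, and in checking that indices stay inside $\Tn{n}$, so that the extended binomial convention kills exactly the out-of-range terms.
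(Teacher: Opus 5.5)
Your proof is correct and is essentially the paper's argument, which the paper states only as ``by induction using \eqref{eq01}''. The top-row formula is a reindexing of \eqref{eq02}. For the left and right sides you induct on $j$ and on $d=n-1-(i+j)$ respectively, which is a good choice since the base cases are then trivial, and each time you merge the two terms of the rewritten local rule with Pascal's identity.

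In the final write-up, delete the abandoned rewritings of \eqref{eq01} in the right-side paragraph. Two of them, $a_{i,j}\equiv a_{i+1,j}+a_{i+1,j-1}$ and $a_{i-1,j}\equiv a_{i,j}+a_{i,j-1}$, are false in general. The one you called invalid, $a_{i,j}\equiv a_{i+1,j-1}+a_{i,j-1}$, is actually valid, but it is of no use for the anti-diagonal induction.
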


\begin{proof}
By induction on $i$ using \eqref{eq01}.
\end{proof}

\begin{prop}\label{prop3}
Let $A=\left\{(i_0,j_0),(i_1,j_1),\ldots,(i_{n-1},j_{n-1})\right\}$ be a $n$-subset of $\Tn{n}$. Then, the set $A$ is a generating index set of $\setST{n}$ if and only if the matrix
$$
\Md{A} = \left( \binom{i_k}{\ell-j_k} \right)_{0\le k,\ell\le n-1}
$$
is invertible modulo $2$.
\end{prop}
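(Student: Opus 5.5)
The plan is to identify the matrix $\Md{A}$ with the matrix of the linear map $\pi_A\circ\pi_{A_T}^{-1}:\Z_2^{n}\to\Z_2^{n}$ in the canonical bases. Once that is done, the statement follows from the characterisation recalled above: $A$ is a generating index set of $\setST{n}$ if and only if $\pi_A\circ\pi_{A_T}^{-1}$ is an automorphism.

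First, take an arbitrary $S=(s_0,\ldots,s_{n-1})\in\Z_2^{n}$ and let $\left(a_{i,j}\right)_{(i,j)\in\Tn{n}}=\ST{S}=\pi_{A_T}^{-1}(S)$, so that $a_{0,\ell}=s_\ell$ for all $\ell$. By the first congruence of Lemma~\ref{lem1}, for each $k\in\{0,\ldots,n-1\}$ we have
$$
a_{i_k,j_k}\equiv\sum_{\ell=0}^{n-1}\binom{i_k}{\ell-j_k}s_\ell \pmod{2}.
$$
Hence the $k$-th coordinate of $\pi_A\circ\pi_{A_T}^{-1}(S)$ is the $k$-th entry of the product $\Md{A}\,S^{\mathsf T}$ reduced modulo $2$. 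So $\pi_A\circ\pi_{A_T}^{-1}$ is the map $S\mapsto \Md{A}S$ over $\Z_2$, after fixing the ordering of $A$ used to index the rows. Reordering the elements of $A$ only permutes the rows of $\Md{A}$, which does not affect invertibility, so the statement is independent of the chosen enumeration.

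Second, a linear endomorphism of the finite-dimensional space $\Z_2^{n}$ is an automorphism if and only if its matrix is invertible over $\Z_2$. Equivalently, $\det\Md{A}$ must be odd, since $\Md{A}$ has integer entries and its reduction modulo $2$ has determinant equal to $\det\Md{A}$ modulo $2$.

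There is no real obstacle here. The only point needing care is the convention $\binom{a}{b}=0$ for $b<0$ or $b>a$. It guarantees that the sum in Lemma~\ref{lem1} runs over the full index range $0\le\ell\le n-1$, and it agrees with \eqref{eq02}: the only nonzero terms have $j_k\le\ell\le j_k+i_k\le n-1$, which is within range because $(i_k,j_k)\in\Tn{n}$.
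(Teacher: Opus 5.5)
Your proposal is correct and matches the paper's proof. Both use the first congruence of Lemma~\ref{lem1} to identify $\pi_A\circ\pi_{A_T}^{-1}$ with multiplication by the binomial matrix of the statement (the paper writes this as $S\mapsto S\cdot M^{\intercal}$), then conclude from the automorphism characterisation; your remarks on row reordering and on the convention $\binom{a}{b}=0$ outside $0\le b\le a$ are accurate additions.
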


\begin{proof}
From Lemma~\ref{lem1}, we know that
$$
a_{i_k,j_k} \equiv \sum_{\ell=0}^{n-1}\binom{i_k}{\ell-j_k}a_{0,\ell}\pmod{2},
$$
for all $k\in\{0,1,\ldots,n-1\}$. It follows that
$$
\pi_A\circ\pi_{A_T}^{-1}(S) \equiv S.{\Md{A}}^{\intercal} \pmod{2},
$$
for all $S\in\Z_2^n$. Finally, the linear map $\pi_A\circ\pi_{A_T}^{-1}$ is an automorphism if and only if the matrix $\Md{A}$ is invertible modulo $2$.
\end{proof}

The notion of generating index sets and the result of Proposition~\ref{prop3} appear in a more general context in \cite{Barbe:2004aa,Barbe:2009aa,Chappelon2019}. It is also proved that the set of generating index sets of $\setST{n}$ define a matroid called the Pascal matroid modulo 2 \cite{Barbe:2004aa,Barbe:2009aa}.

\section{Minors of binomial matrix}

In this section, we consider the infinite binomial matrix
$$
\matr{\mathrm{B}} = \left(\binom{i}{j}\right)_{(i,j)\in\N^2}.
$$
The {\em submatrix} of order $n$ corresponding to rows $0\le a_1<a_2<\cdots<a_n$ and columns $0\le b_1<b_2<\cdots<b_n$ is denoted by
$$
\matbin{a_1,a_2,\ldots,a_n}{b_1,b_2,\ldots,b_n} = \left(\binom{a_i}{b_j}\right)_{1\le i,j\le n}.
$$

\begin{prop}\label{prop1}
For any increasing sequence of $n$ non-negative integers $0\le a_1<a_2<\cdots<a_n$, we have
$$
\det\matbin{a_1,a_2,\ldots,a_n}{0,1,\ldots,n-1} = \frac{\Delta\!\left(a_1,a_2,\ldots,a_n\right)}{\prod_{k=0}^{n-1}k!},
$$
where $\Delta\!\left(a_1,a_2,\ldots,a_n\right)$ is the product of the $t_{n-1}$ differences
$$
\Delta\!\left(a_1,a_2,\ldots,a_n\right) = \prod_{1\le i<j\le n}(a_j-a_i).
$$
\end{prop}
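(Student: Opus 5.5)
The plan is to reduce the determinant to a Vandermonde determinant. Each entry $\binom{a_i}{j-1}$ is a polynomial in $a_i$ of degree $j-1$ with leading coefficient $\frac{1}{(j-1)!}$. Precisely,
$$
\binom{x}{k} = \frac{x(x-1)\cdots(x-k+1)}{k!} = \frac{1}{k!}\,(x)_k ,
$$
where $(x)_k$ denotes the falling factorial, a monic polynomial of degree $k$. This identity holds for all non-negative integers $x$, including $x<k$, where both sides vanish. That agreement with the extended convention $\binom{a}{b}=0$ for $b>a$ is the one small point to check, and it is immediate because the product then contains the factor $0$.

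Next, I pull out the scalar $\frac{1}{(j-1)!}$ from column $j$ for each $j=1,\dots,n$. This gives
$$
\det\matbin{a_1,\ldots,a_n}{0,\ldots,n-1} = \frac{1}{\prod_{k=0}^{n-1}k!}\,\det\bigl((a_i)_{j-1}\bigr)_{1\le i,j\le n}.
$$

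Then I observe that the matrix $\bigl((a_i)_{j-1}\bigr)$ is obtained from the Vandermonde matrix $\bigl(a_i^{\,j-1}\bigr)$ by multiplying on the right by an upper unitriangular matrix. Indeed, $(x)_{k}=x^{k}+\sum_{m<k}c_{k,m}x^m$, so column $j$ equals column $j$ of the Vandermonde matrix plus a linear combination of earlier Vandermonde columns. Equivalently, I perform elementary column operations by induction on $j$: I subtract suitable multiples of columns $1,\dots,j-1$ (which already span the polynomials of degree $<j-1$ evaluated at the $a_i$) from column $j$. These operations do not change the determinant. Hence the determinant equals $\det\bigl(a_i^{\,j-1}\bigr)=\prod_{1\le i<j\le n}(a_j-a_i)=\Delta(a_1,\ldots,a_n)$ by the classical Vandermonde formula, which yields the claim.

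There is no real obstacle. The only care needed is in bookkeeping the column operations rigorously (via the unitriangular change of basis between $\{x^k\}$ and $\{(x)_k\}$) and in confirming that the polynomial expression for $\binom{a_i}{j-1}$ remains valid when $a_i<j-1$. Working over $\mathbb{Q}$ throughout is harmless, since the final identity is between rational numbers.
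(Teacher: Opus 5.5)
Your proposal is correct and follows the paper's proof essentially step for step. Like the paper, you write $\binom{a_i}{j-1}=(a_i)_{j-1}/(j-1)!$, factor $\prod_{k=0}^{n-1}k!$ out of the columns, and reduce the falling-factorial determinant to the Vandermonde determinant via the unitriangular change of basis, which the paper expresses with Stirling numbers of the first kind; you also check explicitly that the identity holds when $a_i<j-1$, a point the paper leaves implicit.
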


\begin{proof}
First, by definition, we have
$$
\det\matbin{a_1,a_2,\ldots,a_n}{0,1,\ldots,n-1} = \det\!\left(\binom{a_i}{j-1}\right)_{1\le i,j\le n} =
\renewcommand{\arraystretch}{1.5}
\begin{vmatrix}
\binom{a_1}{0} & \binom{a_1}{1} & \cdots & \binom{a_1}{n-1} \\
\binom{a_2}{0} & \binom{a_2}{1} & \cdots & \binom{a_2}{n-1} \\
\vdots & \vdots & \ddots & \vdots \\
\binom{a_n}{0} & \binom{a_n}{1} & \cdots & \binom{a_n}{n-1}
\end{vmatrix}.
\renewcommand{\arraystretch}{1}
$$
Since, for all $i,j\in\{1,2,\ldots,n\}$, we have
$$
\binom{a_i}{j-1} = \frac{(a_i)_{j-1}}{(j-1)!},
$$
where $(a_i)_{j-1}$ is the falling factorial
$$
(a_i)_{j-1} = \prod_{k=0}^{j-2}(a_i-k),
$$
we obtain that
$$
\det\matbin{a_1,a_2,\ldots,a_n}{0,1,\ldots,n-1}
\begin{array}[t]{l}
 = \displaystyle\det\!\left(\frac{(a_i)_{j-1}}{(j-1)!}\right)_{1\le i,j\le n} =
\renewcommand{\arraystretch}{1.5}
\begin{vmatrix}
\frac{(a_1)_0}{0!} & \frac{(a_1)_1}{1!} & \cdots & \frac{(a_1)_{n-1}}{(n-1)!} \\
\frac{(a_2)_0}{0!} & \frac{(a_2)_1}{1!} & \cdots & \frac{(a_2)_{n-1}}{(n-1)!} \\
\vdots & \vdots & \ddots & \vdots \\
\frac{(a_n)_0}{0!} & \frac{(a_n)_1}{1!} & \cdots & \frac{(a_n)_{n-1}}{(n-1)!}
\end{vmatrix} \renewcommand{\arraystretch}{1}
\\ \ \\
 = \displaystyle\frac{1}{\prod_{k=0}^{n-1}k!}\det\!\left((a_i)_{j-1}\right)_{1\le i,j\le n}.
\end{array}
$$
Moreover, since, for all $i,j\in\{1,2,\ldots,n\}$, we have
$$
(a_i)_{j-1} = a_i(a_i-1)\cdots(a_i-j+2) = {a_i}^{j-1} + \sum_{k=0}^{j-2}s(j-1,k){a_i}^k,
$$
where $s(j-1,k)$ are Stirling numbers of the first kind, we deduce that
$$
\det\!\left((a_i)_{j-1}\right)_{1\le i,j\le n} =
\renewcommand{\arraystretch}{1.5}
\begin{vmatrix}
(a_1)_0 & (a_1)_1 & \cdots & (a_1)_{n-1} \\
(a_2)_0 & (a_2)_1 & \cdots & (a_2)_{n-1} \\
\vdots & \vdots & \ddots & \vdots \\
(a_n)_0 & (a_n)_1 & \cdots & (a_n)_{n-1}
\end{vmatrix} \renewcommand{\arraystretch}{1}
= \begin{vmatrix}
1 & a_1 & {a_1}^2 & \cdots & {a_1}^{n-1} \\
1 & a_2 & {a_2}^2 & \cdots & {a_2}^{n-1} \\
\vdots & \vdots & \vdots & \ddots & \vdots \\
1 & a_n & {a_n}^2 & \cdots & {a_n}^{n-1}
\end{vmatrix}.
$$
Since the determinant of this Vandermonde matrix corresponds to $\Delta\!\left(a_1,a_2,\ldots,a_n\right)$, we conclude that
$$
\det\matbin{a_1,a_2,\ldots,a_n}{0,1,\ldots,n-1} = \frac{\Delta\!\left(a_1,a_2,\ldots,a_n\right)}{\prod_{k=0}^{n-1}k!},
$$
as announced.
\end{proof}

Beyond the Vandermonde type evaluation in Proposition~\ref{prop1}, there is a rich body of results on binomial minors. In particular, Gessel and Viennot provided a combinatorial interpretation of any minor of the binomial coefficient matrix in terms of non-intersecting paths, with connections to Young tableaux and hook length formulas \cite{Gessel1985}.

For the particular case where $(a_1,a_2,\ldots,a_n)$ is the sequence of the first $n$ triangular numbers, i.e., $a_i=t_{i-1}$ for all $i\in\{1,2,\ldots,n\}$, we obtain the following

\begin{prop}\label{prop2}
For the first $n$ triangular numbers, we have
$$
\det\matbin{t_0,t_1,\ldots,t_{n-1}}{0,1,\ldots,n-1} = \prod_{i=1}^{n-1}{(2i-1)}^{n-i}.
$$
\end{prop}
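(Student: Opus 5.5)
We apply Proposition~\ref{prop1} with $a_i=t_{i-1}$, which is increasing, so that
$$
\det\matbin{t_0,t_1,\ldots,t_{n-1}}{0,1,\ldots,n-1} = \frac{\prod_{0\le i<j\le n-1}(t_j-t_i)}{\prod_{k=0}^{n-1}k!}.
$$
The whole proof then reduces to evaluating this quotient explicitly.

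The key step is the factorization of the differences of triangular numbers. For $0\le i<j$,
$$
t_j-t_i=\frac{j(j+1)-i(i+1)}{2}=\frac{(j-i)(j+i+1)}{2}.
$$
Hence the numerator equals
$$
2^{-t_{n-1}}\prod_{0\le i<j\le n-1}(j-i)\;\prod_{0\le i<j\le n-1}(i+j+1).
$$
The first product is $\Delta(0,1,\ldots,n-1)=\prod_{j=1}^{n-1}j!=\prod_{k=0}^{n-1}k!$, and it cancels exactly with the denominator. It remains to prove
$$
\prod_{0\le i<j\le n-1}(i+j+1)=2^{t_{n-1}}\prod_{m=1}^{n-1}(2m-1)^{n-m}.
$$

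I would establish this identity by induction on $n$. The case $n=1$ is an empty product on both sides. For the step from $n$ to $n+1$, the new factors are those with $j=n$, namely
$$
\prod_{i=0}^{n-1}(n+i+1)=\frac{(2n)!}{n!}.
$$
The right-hand side gains a factor $2^{n}$ from $t_n-t_{n-1}=n$, and also the factor $\prod_{m=1}^{n}(2m-1)$, because each exponent $n-m$ increases by one and the term $m=n$ appears with exponent $1$. So the induction step is the classical identity
$$
\frac{(2n)!}{n!}=2^n\,(2n-1)!!=2^n\prod_{m=1}^n(2m-1).
$$
This follows by splitting $(2n)!$ into its even factors, whose product is $2^n n!$, and its odd factors, whose product is $(2n-1)!!$.

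The only delicate point is the bookkeeping in this last step: tracking the exponent shift $n-m\mapsto n+1-m$ and the power of $2$ correctly. Once the factorization $t_j-t_i=(j-i)(i+j+1)/2$ is in place, there is no real obstacle. As a consequence, the determinant is a product of odd numbers, so it is odd, which is what Section~4 will use.
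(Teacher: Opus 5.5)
Your proof is correct and follows essentially the same route as the paper: it applies Proposition~\ref{prop1}, factors $t_j-t_i=(j-i)(j+i+1)/2$, and splits $(2j)!$ into its even part $2^j j!$ and its odd part $\prod_{i=1}^{j}(2i-1)$. The only differences are cosmetic. You cancel $\prod(j-i)$ against $\prod_{k=0}^{n-1}k!$ first and handle the product over $j$ by induction, whereas the paper computes $\prod_{j=1}^{n-1}(2j)!/2^j$ directly and swaps the order of the double product to obtain the exponents $n-i$.
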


\begin{proof}
First, from definition of the product of differences, we have
$$
\Delta\!\left(t_0,t_1,\ldots,t_{n-1}\right) = \prod_{0\le i<j\le n-1}\left(t_j-t_i\right).
$$
For any non-negative integers $i$ and $j$, it is clear that
$$
t_j - t_i = \frac{j(j+1)}{2} - \frac{i(i+1)}{2} = \frac{(j-i)(j+i+1)}{2}.
$$
It follows that
$$
\Delta\!\left(t_0,t_1,\ldots,t_{n-1}\right) 
\begin{array}[t]{l}
= \displaystyle\prod_{j=1}^{n-1}\prod_{i=0}^{j-1}\frac{(j-i)(j+i+1)}{2} = \prod_{j=1}^{n-1}\frac{(2j)!}{2^j} = \prod_{j=1}^{n-1}\prod_{i=1}^{j}\frac{2i(2i-1)}{2} \\
= \displaystyle\prod_{j=1}^{n-1}\prod_{i=1}^{j}i(2i-1) = \prod_{j=1}^{n-1}j!\prod_{i=1}^{j}(2i-1) = \left(\prod_{j=1}^{n-1}j!\right)\prod_{j=1}^{n-1}\prod_{i=1}^{j}(2i-1) \\
= \displaystyle\left(\prod_{j=1}^{n-1}j!\right)\prod_{i=1}^{n-1}\prod_{j=i}^{n-1}(2i-1) = \left(\prod_{j=1}^{n-1}j!\right)\prod_{i=1}^{n-1}{(2i-1)}^{n-i}.
\end{array}
$$
Using Proposition~\ref{prop1}, we conclude that
$$
\det\matbin{t_0,t_1,\ldots,t_{n-1}}{0,1,\ldots,n-1} = \frac{\Delta\!\left(t_0,t_1,\ldots,t_{n-1}\right)}{\prod_{k=0}^{n-1}k!} = \prod_{i=1}^{n-1}{(2i-1)}^{n-i},
$$
as announced.
\end{proof}

Since their determinants are odd by Proposition~\ref{prop2}, we then know that the binomial matrices of the form
$$
\matbin{t_0,t_1,\ldots,t_{n-1}}{0,1,\ldots,n-1}
$$
are all invertible modulo $2$, for all positive integers $n$.

\section{Proof of Delahan's result}

Using the notion of generating index sets, it is easy to see that Theorem~\ref{thm1} is a direct corollary of the following

\begin{thm}\label{thm2}
The set
$$
A_n = \left\{ (t_s,t_{r+1}-t_s-1) \ \middle|\ 0\le s\le r\le n-2 \right\}
$$
is a generating index set of $\setST{t_{n-1}}$.
\end{thm}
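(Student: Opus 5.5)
The plan is to apply Proposition~\ref{prop3} with $N=t_{n-1}$ and show that the matrix $\Md{A_n}$ is block lower triangular, with diagonal blocks of the form treated in Proposition~\ref{prop2}.

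\textbf{Setup.} First, $|A_n|=\sum_{r=0}^{n-2}(r+1)=t_{n-1}=N$. Each pair $(t_s,t_{r+1}-t_s-1)$ lies in $\Tn{N}$, because its coordinates sum to $t_{r+1}-1\le t_{n-1}-1$. The pairs are distinct, since $(s,r)$ can be recovered from them. By Proposition~\ref{prop3}, it therefore suffices to show that
$$
\Md{A_n}=\left(\binom{t_s}{\ell-(t_{r+1}-t_s-1)}\right)_{(s,r),\,\ell}
$$
is invertible modulo $2$. Here the rows are indexed by the pairs $(s,r)$ with $0\le s\le r\le n-2$, and the columns by $\ell\in\{0,1,\ldots,N-1\}$.

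\textbf{Ordering rows and columns into blocks.} Order the rows by increasing $r$, and within each $r$ by increasing $s$; the $r$-th row block has $r+1$ rows. Partition the columns into consecutive blocks
$$
C_r=\{t_r,t_r+1,\ldots,t_{r+1}-1\},\qquad 0\le r\le n-2,
$$
so that $C_r$ also has $r+1$ elements. Permuting rows and columns does not affect invertibility, so we may use this ordering.

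\textbf{Zero blocks.} Take a row $(s,r)$ and a column $\ell\in C_{r'}$ with $r'>r$. Then $\ell\ge t_{r+1}$, so
$$
\ell-(t_{r+1}-t_s-1)\ge t_s+1>t_s,
$$
and the entry $\binom{t_s}{\ell-(t_{r+1}-t_s-1)}$ vanishes by the extended convention for binomial coefficients. Hence every block lying above the diagonal (row block $r$, column block $r'>r$) is zero, and the matrix is block lower triangular.

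\textbf{Diagonal blocks.} Write $\ell\in C_r$ as $\ell=t_{r+1}-1-m$ with $m\in\{0,\ldots,r\}$. The entry in row $(s,r)$ becomes
$$
\binom{t_s}{t_s-m}=\binom{t_s}{m},
$$
where the symmetry also holds in the degenerate case $m>t_s$, both sides then being $0$. Up to reversing the column order, the $r$-th diagonal block is therefore $\matbin{t_0,t_1,\ldots,t_r}{0,1,\ldots,r}$. By Proposition~\ref{prop2}, its determinant is $\prod_{i=1}^{r}(2i-1)^{r+1-i}$, which is odd.

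\textbf{Conclusion.} $\det\Md{A_n}$ equals $\pm$ the product of the diagonal block determinants, so it is odd and $\Md{A_n}$ is invertible modulo $2$. Proposition~\ref{prop3} then shows that $A_n$ is a generating index set of $\setST{t_{n-1}}$.

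The main obstacle is choosing this pairing of rows with column blocks so that a triangular structure appears. Once the blocks $C_r$ are chosen, the vanishing above the diagonal is a one-line inequality, and the diagonal blocks reduce directly to Proposition~\ref{prop2}.
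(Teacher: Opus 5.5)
Your proof is correct and follows the paper's argument essentially step for step. You use the same row ordering ($k=t_r+s$), the same column blocks $\{t_r,\ldots,t_{r+1}-1\}$, and the same inequality $\ell-(t_{r+1}-t_s-1)>t_s$ to get block lower-triangularity (the paper's Lemma 2), and you reduce the diagonal blocks to Proposition 2 in the same way, via $\binom{t_s}{t_s-m}=\binom{t_s}{m}$ and a column reversal. Your explicit checks that $A_n$ consists of $t_{n-1}$ distinct elements of $\Tn{t_{n-1}}$ are a small addition the paper leaves implicit; you could also note that the second coordinate is non-negative, since $t_{r+1}-t_s-1\ge r\ge 0$.
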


First, the set $A_n$ can be seen as
$$
A_n=\left\{ (i_k,j_k)\ \middle| \ k\in\{0,1,\ldots,t_{n-1}-1\} \right\},
$$
where
$$
i_k = t_s \quad\text{and}\quad j_k=t_{r+1}-t_s-1,
$$
when $k=t_r+s$ with $0\le s\le r\le n-2$. We then consider the square matrix $\Md{A_n}$ of size $t_{n-1}$ defined by
$$
\Md{A_n} = \left( \binom{i_k}{\ell-j_k} \right)_{0\le k,\ell\le t_{n-1}-1}.
$$

The proof of Theorem~\ref{thm2} is based on the following

\begin{lem}\label{lem2}
The matrix $\Md{A_n}$ is a block lower-triangular matrix with square diagonal blocks of size $1,2,\ldots,n-1$
$$
\matbin{t_0,t_1,\ldots,t_r}{r,r-1,\ldots,0},
$$
for all $r\in\{0,1,\ldots,n-2\}$.
\end{lem}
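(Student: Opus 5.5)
Index rows by $k=t_r+s$ with $0\le s\le r\le n-2$. The row-block $r$ consists of the $r+1$ consecutive rows $k=t_r,\dots,t_r+r$. Index columns by $\ell=t_{r'}+u$ with $0\le u\le r'$, so the column-block $r'$ consists of $\ell=t_{r'},\dots,t_{r'+1}-1$. This is a partition of $\{0,\dots,t_{n-1}-1\}$ into consecutive blocks of sizes $1,2,\dots,n-1$ for both rows and columns. Two things then need checking.

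First, the blocks strictly above the diagonal vanish. Take the row $k=t_r+s$ and a column $\ell$ in a block $r'>r$. Then $\ell\ge t_{r'}\ge t_{r+1}$, so
$$
\ell-j_k\ \ge\ t_{r+1}-(t_{r+1}-t_s-1)=t_s+1>t_s=i_k .
$$
Hence $\binom{i_k}{\ell-j_k}=0$ under the extended convention $\binom{a}{b}=0$ for $b>a$. The whole block $(r,r')$ is zero, so $\Md{A_n}$ is block lower-triangular.

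Second, identify the diagonal block $(r,r)$. For $k=t_r+s$ and $\ell=t_r+u$ with $0\le s,u\le r$, we have
$$
\ell-j_k=t_r+u-t_{r+1}+t_s+1=t_s+u-r .
$$
The entry is therefore $\binom{t_s}{t_s-(r-u)}=\binom{t_s}{r-u}$ by the symmetry $\binom{a}{b}=\binom{a}{a-b}$. This symmetry remains valid under the extended convention, including the cases where $r-u>t_s$ or $r-u<0$, since both sides then vanish. As $u$ runs over $0,1,\dots,r$, the lower index $r-u$ runs over $r,r-1,\dots,0$. So the diagonal block is exactly $\left(\binom{t_s}{b}\right)$ with rows $t_0,\dots,t_r$ and columns $r,r-1,\dots,0$, which is the matrix $\matbin{t_0,t_1,\ldots,t_r}{r,r-1,\ldots,0}$ of the statement, read with the column order reversed.

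The computation is short, and there is no real obstacle. The only delicate point is the bookkeeping of indices: one must confirm that the column blocks are the same consecutive blocks $[t_{r'},t_{r'+1})$ as the row blocks, and that the symmetry step is legitimate at the boundary values. Both hold because $t_s\ge 0$ and $0\le r-u\le r$.
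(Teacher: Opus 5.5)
Your proof is correct and follows the same route as the paper. Both arguments get block lower-triangularity from the fact that, in row-block $r$, every column $\ell\ge t_{r+1}$ gives $\ell-j_k\ge t_s+1>i_k$. Both then identify the diagonal block $\bigl(\binom{t_s}{r-u}\bigr)_{0\le s,u\le r}$ through the shift $\ell-j_k=t_s-(r-u)$ and the symmetry $\binom{t_s}{t_s-m}=\binom{t_s}{m}$. Your parametrization $\ell=t_r+u$ is only a tidier bookkeeping of the paper's indexing of columns by $t_{r+1}-1-l$.
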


For instance, the block lower-triangular structure of $\Md{A_5}$ is depicted in Figure~\ref{fig05}.

\begin{figure}[htbp]
\begin{center}
\begin{tikzpicture}[scale=0.5]
\draw (-1,-1) -- (-1,1)-- (1,1) -- (1,-1) -- (-1,-1);
\draw (0,0) node {\tiny $\matbin{t_0}{0}$};
\draw (1,-1) -- (5,-1) -- (5,-5) -- (1,-5) -- (1,-1);
\draw (3,-3) node {$\matbin{t_0,t_1}{1,0}$};
\draw (5,-5) -- (11,-5) -- (11,-11) -- (5,-11) -- (5,-5);
\draw (8,-8) node {$\matbin{t_0,t_1,t_2}{2,1,0}$};
\draw (11,-11) -- (19,-11) -- (19,-19) -- (11,-19) -- (11,-11);
\draw (15,-15) node {$\matbin{t_0,t_1,t_2,t_3}{3,2,1,0}$};
\draw (-1,1) -- (19,1) -- (19,-19) -- (-1,-19) -- (-1,1);
\draw (3,0) node {$\matr{0}$};
\draw (8,-2) node {$\matr{0}$};
\draw (15,-5) node {$\matr{0}$};
\draw (0,-3) node {$\matr{*}$};
\draw (2,-8) node {$\matr{*}$};
\draw (5,-15) node {$\matr{*}$};
\end{tikzpicture}
\end{center}
\caption{Structure of the matrix $\Md{A_5}$}\label{fig05}
\end{figure}
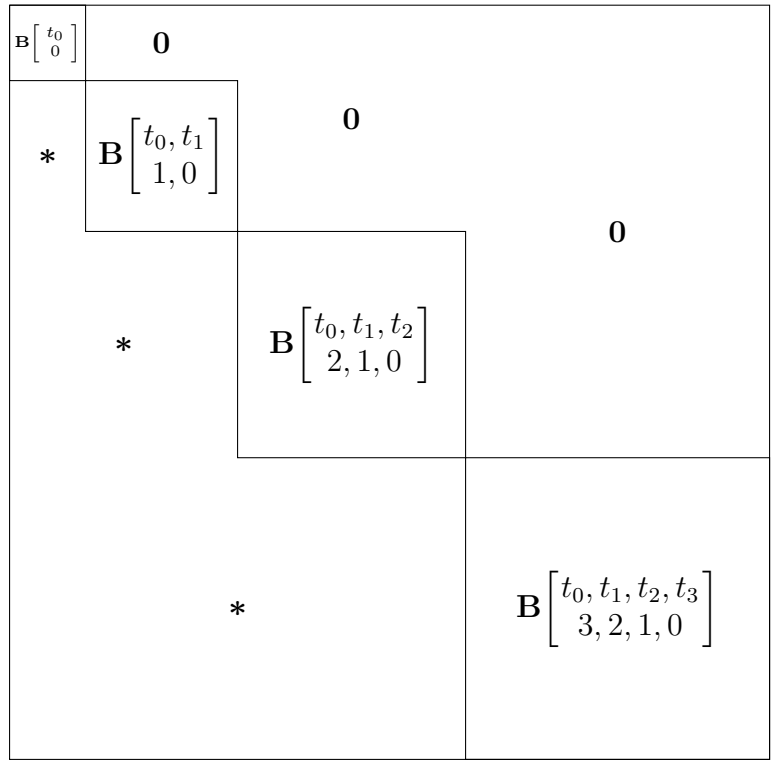

\begin{proof}
Let $r\in\{0,1,\ldots,n-2\}$. We can see that the $r+1$ rows
$$
R_{t_r},\ R_{t_r+1},\ \ldots,\ R_{t_r+r}
$$
have a particular form. For $k=t_r+s$ with $s\in\{0,1,\ldots,r\}$, we know that $i_k=t_s$ and $j_k=t_{r+1}-t_s-1$. Then,
$$
R_{t_r+s} = \left(\binom{t_s}{\ell-t_{r+1}+t_s+1}\right)_{0\le\ell\le t_{n-1}-1}
$$
Since $l-t_{t+1}+t_s+1>t_s$ for $l\ge t_{r+1}$, we know that the coefficients in the last $t_{n-1}-t_{r+1}$ columns vanish. Moreover, the coefficients in the $(t_{r+1}-1-l)$\textsuperscript{th} column is
$$
\binom{t_s}{t_s-l} = \binom{t_s}{l}
$$
for all $l\in\{0,1,\ldots,t_{r+1}-1\}$. Therefore, the submatrix corresponding to the $r+1$ rows $R_{t_r},R_{t_r+1},\ldots,R_{t_r+r}$ and the $r+1$ columns $C_{t_r},C_{t_r+1},\ldots,C_{t_r+r}$ is
$$
\matbin{t_0,t_1,\ldots,t_r}{r,r-1,\ldots,0}.
$$
Therefore, the matrix $\Md{A_n}$ is a block lower-triangular matrix with square diagonal blocks $\matbin{t_0,t_1,\ldots,t_r}{r,r-1,\ldots,0}$, for all $r\in\{0,1,\ldots,n-2\}$, of size $1,2,\ldots,n-1$.
\end{proof}

\begin{proof}[Proof of Theorem~\ref{thm2}]
From Lemma~\ref{lem2}, the matrix $\Md{A_n}$ is a block lower-triangular matrix with square diagonal blocks $\matbin{t_0,t_1,\ldots,t_r}{r,r-1,\ldots,0}$, for all $r\in\{0,1,\ldots,n-2\}$.

Using Proposition~\ref{prop2}, we have
$$
\det\matbin{t_0,t_1,\ldots,t_{r}}{r,r-1,\ldots,0} = (-1)^{\left\lfloor\frac{r+1}{2}\right\rfloor}\det\matbin{t_0,t_1,\ldots,t_{r}}{0,1,\ldots,r} = (-1)^{\left\lfloor\frac{r+1}{2}\right\rfloor}\prod_{i=1}^{r}{(2i-1)}^{r+1-i},
$$
for all $r\in\{0,1,\ldots,n-2\}$. It follows that
$$
\det\matbin{t_0,t_1,\ldots,t_{r}}{r,r-1,\ldots,0} \equiv 1\pmod{2}
$$
and thus the matrix $\matbin{t_0,t_1,\ldots,t_{r}}{r,r-1,\ldots,0}$ is invertible modulo $2$, for all $r\in\{0,1,\ldots,n-2\}$. Since all the diagonal blocks are full rank, we deduce that the matrix $\Md{A_n}$ is full rank and is invertible modulo $2$. We conclude from Proposition~\ref{prop3} that $A_n$ is a generating index set of $\setST{t_{n-1}}$.
\end{proof}

We are now ready to prove Delahan's induced-universality theorem.

\begin{proof}[Proof of Theorem~\ref{thm1}]
Let $n$ be a positive integer. We consider the $n$-subset
$$
W_{n} = \left\{ t_i+1\ \middle|\ i\in\{0,1,\ldots,n-1\} \right\}
$$
of $\left\{1,2,\ldots,t_{n-1}+1\right\}$. For any $G\in\mathcal{G}_{t_{n-1}+1}$, if $\left(a_{i,j}\right)_{1\le i,j\le t_{n-1}+1}$ is its adjacency matrix and
$$
\nabla_1=\left(a_{i+1,i+j+2}\right)_{(i,j)\in\Tn{t_{n-1}}}
$$
its upper-triangular part, then the adjacency matrix of $G[W_n]$ is $\left(a_{t_i+1,t_j+1}\right)_{0\le i,j\le n-1}$ and its upper-triangular part is
$$
\nabla_2=\left(a_{t_i+1,t_{i+j+1}+1}\right)_{(i,j)\in\Tn{n}}.
$$
The triangle $\nabla_2$ can then be seen as the subtriangle of $\nabla_1=\left(b_{i,j}\right)_{(i,j)\in\Tn{t_{n-1}}}$ defined by
$$
\nabla_2 = \left(b_{i,j}\right)_{(i,j)\in A_n},
$$
where the index set is given by $A_n\subset\Tn{t_{n-1}}$. Let $\gamma$ be the isomorphism that assigns to each triangle of zeroes and ones of size $\Tn{n}$ the simple graph of $\mathcal{G}_n$ whose adjacency matrix has this triangle as upper-triangular part and let $\theta$ be the canonical isomorphism
$$
\theta :
\begin{array}[t]{ccc}
\setSG{t_{n-1}+1} & \longrightarrow & \setST{t_{n-1}}\\
G(S) & \longmapsto & \ST{S}\\
\end{array}
$$
It is clear that
$$
\gamma\circ\pi_{A_n}\circ\theta (G) = G[W_n],
$$
for all $G\in\setSG{t_{n-1}+1}$. Finally, since $A_n$ is a generating index set of $\setST{t_{n-1}}$ by Theorem~\ref{thm2}, we deduce that $\pi_{A_n}$ and thus $\gamma\circ\pi_{A_n}\circ\theta$ both are isomorphisms.
\end{proof}

\addcontentsline{toc}{section}{References}
\bibliographystyle{plain}
\bibliography{../../Biblio_ALL}

\begin{thebibliography}{10}

\bibitem{Augier:2008aa}
Maxime Augier and Shalom Eliahou.
\newblock Parity-regular {S}teinhaus graphs.
\newblock {\em Math. Comp.}, 77(263):1831--1839, 2008.

\bibitem{Bailey:1988aa}
Craig Bailey and Wayne Dymacek.
\newblock Regular {S}teinhaus graphs.
\newblock {\em Congr. Numer.}, 66:45--47, 1988.
\newblock Nineteenth Southeastern Conference on Combinatorics, Graph Theory,
  and Computing (Baton Rouge, LA, 1988).

\bibitem{Barbe:2004aa}
Andr\'{e} Barb\'{e} and Fritz von Haeseler.
\newblock The {P}ascal matroid as a home for generating sets of cellular
  automata configurations defined by quasigroups.
\newblock {\em Theoret. Comput. Sci.}, 325(2):171--214, 2004.

\bibitem{Barbe:2009aa}
Andr\'{e} Barb\'{e} and Fritz von Haeseler.
\newblock Frame cellular automata: configurations, generating sets and related
  matroids.
\newblock {\em Discrete Math.}, 309(6):1222--1254, 2009.

\bibitem{Chang:1999aa}
Gerard~J. Chang, Bhaskar DasGupta, Wayne~M. Dym\`a\v{c}ek, Martin F\"{u}rer,
  Matthew Koerlin, Yueh-Shin Lee, and Tom Whaley.
\newblock Characterizations of bipartite {S}teinhaus graphs.
\newblock {\em Discrete Math.}, 199(1-3):11--25, 1999.

\bibitem{Chappelon2008}
Jonathan Chappelon.
\newblock On a problem of {M}olluzzo concerning {S}teinhaus triangles in finite
  cyclic groups.
\newblock {\em Integers}, 8(1):A37, 2008.
\newblock 29 pages.

\bibitem{Chappelon2009}
Jonathan Chappelon.
\newblock Regular {S}teinhaus graphs of odd degree.
\newblock {\em Discrete Math.}, 309(13):4545--4554, 2009.

\bibitem{Chappelon2011}
Jonathan Chappelon.
\newblock A universal sequence of integers generating balanced {S}teinhaus
  figures modulo an odd number.
\newblock {\em J. Combin. Theory Ser. A}, 118(1):291--315, 2011.

\bibitem{Chappelon2017a}
Jonathan Chappelon.
\newblock Periodic balanced binary triangles.
\newblock {\em Discrete Math. Theor. Comput. Sci.}, 19(3):Paper 13, 2017.

\bibitem{Chappelon2019}
Jonathan Chappelon.
\newblock Symmetric binary {S}teinhaus triangles and parity-regular {S}teinhaus
  graphs.
\newblock {\em J. Combin. Theory Ser. A}, 187:Paper No. 105561, 56 pages, 2022.

\bibitem{Chappelon2025}
Jonathan Chappelon.
\newblock Balanced {S}teinhaus triangles.
\newblock arXiv:2508.05159, 109 pages, 2025.

\bibitem{Delahan:1998aa}
Franz~A. Delahan.
\newblock Induced embeddings in {S}teinhaus graphs.
\newblock {\em J. Graph Theory}, 29(1):1--9, 1998.

\bibitem{Dymacek:1979aa}
Wayne~M. Dymacek.
\newblock Steinhaus graphs.
\newblock In {\em Proceedings of the {T}enth {S}outheastern {C}onference on
  {C}ombinatorics, {G}raph {T}heory and {C}omputing ({F}lorida {A}tlantic
  {U}niv., {B}oca {R}aton, {F}la., 1979)}, Congress. Numer., XXIII--XXIV, pages
  399--412. Utilitas Math., Winnipeg, Man., 1979.

\bibitem{Dymacek:1986aa}
Wayne~M. Dymacek.
\newblock Bipartite {S}teinhaus graphs.
\newblock {\em Discrete Math.}, 59(1-2):9--20, 1986.

\bibitem{Dymacek:1995aa}
Wayne~M. Dymacek and Tom Whaley.
\newblock Generating strings for bipartite {S}teinhaus graphs.
\newblock {\em Discrete Math.}, 141(1-3):95--107, 1995.

\bibitem{Dymacek:1999aa}
Wayne~M. Dym\`a\v{c}ek, Matthew Koerlin, and Tom Whaley.
\newblock A survey of {S}teinhaus graphs.
\newblock In {\em Combinatorics, graph theory, and algorithms, {V}ol. {I}, {II}
  ({K}alamazoo, {MI}, 1996)}, pages 313--323. New Issues Press, Kalamazoo, MI,
  1999.

\bibitem{Dymacek:2000aa}
Wayne~M. Dym\`a\v{c}ek, Jean-Guy Speton, and Tom Whaley.
\newblock Planar {S}teinhaus graphs.
\newblock In {\em Proceedings of the {T}hirty-first {S}outheastern
  {I}nternational {C}onference on {C}ombinatorics, {G}raph {T}heory and
  {C}omputing ({B}oca {R}aton, {FL}, 2000)}, volume 144, pages 193--206, 2000.

\bibitem{Eliahou:2004aa}
Shalom Eliahou and Delphine Hachez.
\newblock On a problem of {S}teinhaus concerning binary sequences.
\newblock {\em Experiment. Math.}, 13(2):215--229, 2004.

\bibitem{Gessel1985}
Ira Gessel and G\'erard Viennot.
\newblock Binomial determinants, paths, and hook length formulae.
\newblock {\em Adv. in Math.}, 58(3):300--321, 1985.

\bibitem{Harborth:1972aa}
Heiko Harborth.
\newblock Solution of {S}teinhaus's problem with plus and minus signs.
\newblock {\em J. Combinatorial Theory Ser. A}, 12:253--259, 1972.

\bibitem{Molluzzo:1978aa}
John~C. Molluzzo.
\newblock Steinhaus graphs.
\newblock In {\em Theory and applications of graphs ({P}roc. {I}nternat.
  {C}onf., {W}estern {M}ich. {U}niv., {K}alamazoo, {M}ich., 1976)}, volume 642
  of {\em Lecture Notes in Math.}, pages 394--402. Springer, Berlin, 1978.

\bibitem{Steinhaus:1964aa}
Hugo Steinhaus.
\newblock {\em One hundred problems in elementary mathematics}.
\newblock Basic Books, Inc., Publishers, New York, 1964.
\newblock With a foreword by Martin Gardner.

\end{thebibliography}

\end{document}